\numberwithin{equation}{section}
\renewcommand{\@biblabel}[1]{#1\hfill \hspace{-0.2cm}}
\theoremstyle{plain}
\newtheorem{theorem}{Theorem}
\newtheorem{proposition}[theorem]{Proposition}
\newtheorem{lemma}[theorem]{Lemma}
\theoremstyle{remark}
\newtheorem{remark}{Remark}
\theoremstyle{definition}
\newtheorem{definition}{Definition}
\begin{document}

\title{A cholera mathematical model with vaccination
and the biggest outbreak\\ of world's history}

% ---------------------------------------

\author{%
Ana P. Lemos-Pai\~{a}o,
Cristiana J. Silva 
and
Delfim F. M. Torres\corrauth
}

\shortauthors{the Author(s)}

% -------------

\address{%
\addr{}{Center for Research and Development in Mathematics and Applications (CIDMA),\\
Department of Mathematics, University of Aveiro, 3810-193 Aveiro, Portugal}}

\corraddr{Email: delfim@ua.pt; Tel: +351 234370668; Fax: +351 234370066.}

% ---------------------------------------

\begin{abstract}
We propose and analyse a mathematical model for cholera considering vaccination. 
We show that the model is epidemiologically and mathematically well posed and 
prove the existence and uniqueness of disease-free and endemic equilibrium points. 
The basic reproduction number is determined and the local asymptotic stability 
of equilibria is studied. The biggest cholera outbreak of world's history 
began on 27$^{\text{th}}$ April 2017, in Yemen. Between 27$^{\text{th}}$ 
April 2017 and 15$^{\text{th}}$ April 2018 there were 2275 deaths 
due to this epidemic. A vaccination campaign began on 6$^{\text{th}}$ 
May 2018 and ended on 15$^{\text{th}}$ May 2018. 
We show that our model is able to describe well this outbreak. 
Moreover, we prove that the number of infected individuals would have been 
much lower provided the vaccination campaign had begun earlier.
\end{abstract}

\keywords{
mathematical modeling;
SITRVB cholera model;
local asymptotic stability;
vaccination treatment;
Yemen cholera outbreak
\newline
\textbf{Mathematics Subject Classification:} 34C60, 92D30}

\maketitle

% ---------------------------------------

\section{Introduction}

Cholera is an acute diarrhoeal illness caused by infection of
the intestine with bacterium \textit{vibrio cholerae}, 
which lives in an aquatic organism \cite{Lemos-Paiao}. 
The ingestion of contaminated water
can cause cholera outbreaks, as John Snow proved in 1854 \cite{Shuai}.
This is a possibility of transmission of the disease, but others do exist. 
For example, susceptible individuals can become infected 
if they become in contact with contaminated people. When individuals 
are at an increased risk of infection, then they can transmit 
the disease to those who live with them, by reflecting 
food preparation or using water storage containers \cite{Shuai}. 
An individual can be infected with or without symptoms and some of these 
can be watery diarrhea, vomiting and leg cramps. 
If contaminated individuals do not get treatment, 
then they become dehydrated, suffering acidosis and circulatory
collapse. This situation can lead to death within 12 to 24 hours \cite{Mwasa,Shuai}.
Some studies and experiments suggest that a recovered individual can be immune
to the disease during a period of 3 to 10 years. Nevertheless, recent researches 
conclude that immunity can be lost after a period of weeks to months \cite{Neilan,Shuai}.

Since 1979, several mathematical models for the 
transmission of cholera have been proposed: see, e.g.,
\cite{Capasso:1979,Capone,Codeco:2001,Hartley:2006,Hove-Musekwa,Joh:2009,%
Lemos-Paiao, Mukandavire:2008,Mwasa,Neilan,Pascual,Shuai} 
and references cited therein. In \cite{Neilan}, 
the authors propose a SIR (Susceptible--Infectious--Recovered)
type model and consider two classes of bacterial concentrations 
(hyperinfectious and less-infectious) and two classes 
of infectious individuals (asymptomatic and symptomatic). 
In \cite{Shuai}, another SIR-type model is proposed 
that incorporates hyperinfectivity (where infectivity 
varies with time since the pathogen was shed)
and temporary immunity, using distributed delays. 
The authors of \cite{Mwasa} incorporate in a SIR-type model 
public health educational campaigns, vaccination, quarantine and treatment, 
as control strategies in order to curtail the disease. 

Between 2007 and 2018, several cholera outbreaks occurred,
namely in Angola, Haiti, Zimbabwe and Yemen \cite{AsianScientist,Shuai,WHO}.
Mathematical models have been developed and studied
in order to understand the transmission dynamics of cholera epidemics, 
mostly focusing on the epidemic that occurred in Haiti, 2010--2011 \cite{AsianScientist}. 
In \cite{Lemos-Paiao}, a SIQR (Susceptible--Infectious--Quarantined--Recovered) 
type model, which also considers a class for bacterial concentration, is analysed,  
being shown that it fits well the cholera outbreak in the Department 
of Artibonite -- Haiti, from 1$^{\text{st}}$ November 2010 until 1$^{\text{st}}$ May 2011. 
Furthermore, an optimal control problem is formulated and solved, 
where the optimal control function represents the fraction of infected 
individuals that are submitted to treatment through quarantine \cite{Lemos-Paiao}.
Many studies have been developed with the purpose to find and evaluate 
measures to contain the cholera spread. Nevertheless, it was not possible yet
to obtain solutions in real time that can stop the 
cholera epidemics \cite{AsianScientist}.

Recently, the biggest outbreak of cholera in the history of the world 
has occurred in Yemen \cite{TelegraphNews}. 
The epidemic began in October 2016 and in February-March 2017 was in decline. 
However, on 27$^{\text{th}}$ April 2017 the epidemic returned. This happened ten 
days after Sana'a's sewer system had stopped working. Problems in infrastructures, 
health, water and sanitation systems in Yemen, allowed the fast spread of the 
disease \cite{wikipedia_cholera_yemen}. Between 27$^{\text{th}}$ 
April 2017 and 15$^{\text{th}}$ April 2018 there were 1 090 280 
suspected cases reported and 2 275 deaths due to cholera \cite{WHO_15abril2018}.
In \cite{Nishiura}, Nishiura et al. study mathematically this outbreak, trying 
to forecast the cholera epidemic in Yemen, explicitly 
addressing the reporting delay and ascertainment bias.
A vaccine for cholera is currently available, but poor sanitation and the 
lack of access to vaccines promote the spread of the disease \cite{AsianScientist}.

Following World Health Organization (WHO) recommendations, the first-ever 
oral vaccination campaign against cholera had been launched on 6$^\text{th}$ 
May 2018 in Yemen, but it was concluded on 15$^\text{th}$ May 2018 \cite{WHO_vaccination}, 
due to the lack of national governmental authorization to do the vaccination \cite{reuters}. 
Aid workers say that one reason of the delayed of the campaign vaccination 
is due to some senior Houthi officials who objected to vaccination. 
This campaign coincided with the rainy season and some health workers 
fear that this could spread the disease \cite{reuters}. 
This campaign just covered four districts in Aden, which were at a
high risk of fast spread of the disease, and just 350 000 individuals 
(including pregnant women) were vaccinated \cite{reuters,TelegraphNews}. 
It is important to note that the vaccinated individuals represent, approximately, 
1.21\% of the total population, since the total population of Yemen, in 2018, 
is 28 915 284 \cite{Yemen_pop}. Lorenzo Pizzoli, WHO's cholera expert, said that the 
campaign hoped to cover at least four million people in areas at risk (corresponding, 
approximately, to 14\% of total population) and Michael Ryan, WHO's Assistant 
Director-General, revealed that they were negotiating with Yemen health authorities 
in order to vaccinate people from all high risks zones \cite{reuters}. 
The International Coordinating Group on Vaccine Provision had planned 
one million cholera vaccines for Yemen in July 2017, but WHO and Yemen 
local authorities decided to postpone it and the doses were diverted 
to South Sudan. WHO and GAVI, the Vaccine Alliance, affirmed that the largest 
cholera vaccination is now being carried out in five countries 
(Kenya, Malawi, South Sudan, Uganda and Zambia). It is expected that 
this campaign targets more than two million people across Africa \cite{reuters}.

In this paper, we propose a SITRV (Susceptible--Infectious--Treated--Recovered--Vaccinated) 
type model, which includes a class of bacterial concentration. In Section~\ref{Sec:model},
we formulate and explain the mathematical model. Then, in Section~\ref{Sec:mod:analysis_vaccination}, 
we show that the model is mathematically well posed and it has biological meaning. 
We prove the existence and uniqueness of the disease-free and endemic equilibrium points 
and compute the basic reproduction number. The sensitivity of the basic reproduction 
number with respect to all parameters of the model is analysed in 
Subsection~\ref{subsec:sensitivity_r0} and the stability analysis 
of equilibria is carried out in Subsection~\ref{subsec:stability_analysis}. 
In Section~\ref{sec:num:simu_vaccination}, we show that the model fits well 
the cholera outbreak in Yemen, between 27$^{\text{th}}$ April 2017 and 
15$^{\text{th}}$ April 2018. Through numerical simulations, we illustrate the 
impact of vaccination of susceptible individuals in Yemen. We end with 
Section~\ref{sec:conclusion}, by deriving some conclusions about the 
importance of vaccination campaigns on the control and eradication 
of a cholera outbreak.

We trust that our work is of great significance, because 
it provides a mathematical model for cholera that is deeply studied  
and allows to obtain important conclusions about the relevance 
of vaccination campaigns in cholera outbreaks. We show that, 
if it had existed a vaccination campaign from the beginning 
of the outbreak in Yemen, then the epidemic would have been extinguished. 
Actually, we believe that the absence of this type of prevention 
measures in Yemen, was one of the responsible for provoking the biggest 
cholera outbreak in world's history \cite{TelegraphNews}, killing 
2275 individuals until 15$^{\text{th}}$ April 2018.

% ------------------------------------------------------

\section{Model formulation}
\label{Sec:model}

We modify the model studied in \cite{Lemos-Paiao}, 
adding a vaccination class and considering different kinds of cholera's treatment.
The model is a SITRV (Susceptible--Infectious--Treated--Recovered--Vaccinated) 
type model and considers a class of bacterial concentration for the 
dynamics of cholera. The total human population $N(t)$ is divided into 
five classes: susceptible $S(t)$, infectious with symptoms $I(t)$, 
in treatment $T(t)$, recovered $R(t)$ and vaccinated $V(t)$ at time $t$, 
for $t \ge 0$. Furthermore, we consider a class $B(t)$ that reflects 
the bacterial concentration at time $t$. 
We assume that there is a positive 
recruitment rate $\Lambda$ into the susceptible class $S(t)$ and a positive 
natural death rate $\mu$, for all time $t$ under study. Susceptible
individuals can be vaccinated at rate $\varphi\geq 0$ and become infected 
with cholera at rate $\displaystyle \frac{\beta B(t)}{\kappa+B(t)}\geq 0$,
which is dependent on time $t$. Note that $\beta>0$ is the ingestion 
rate of the bacterium through contaminated sources, $\kappa>0$ is the half 
saturation constant of the bacterial population, and $\displaystyle \frac{B(t)}{\kappa+B(t)}$ 
measures the possibility of an infected individual to have the disease with symptoms, 
given a contact with contaminated sources \cite{Mwasa}. Any recovered 
and vaccinated individual can lose the immunity at rate $\omega_1\geq 0$ 
and $\omega_2\geq0$, respectively, becoming susceptible again. 
Different types of treatment for cholera infected individuals 
are considered based on \cite{WHO_treatment}.
The infected individuals can get a proper treatment, 
at rate $\delta\geq 0$, and the individuals in treatment can recover at rate
$\varepsilon\geq0$. The disease-related death rates associated with the individuals
that are infected and in treatment are $\alpha_1\geq0$ and $\alpha_2\geq0$, respectively. 
Each infected individual contributes to the increase of the bacterial concentration
at rate $\eta>0$. On the other hand, the bacterial concentration can decrease
at mortality rate $d>0$. These assumptions are translated into the following
mathematical model:
\begin{align}
\label{ModeloColera_vaccination}
&\begin{cases}
S'(t)=\Lambda-\displaystyle\frac{\beta B(t)}{\kappa+B(t)}S(t)
+\omega_1 R(t)+\omega_2V(t)-(\varphi+\mu)S(t),\\
I'(t)=\displaystyle\frac{\beta B(t)}{\kappa+B(t)}S(t)
-\left(\delta+\alpha_1+\mu\right)I(t),\\
T'(t)=\delta I(t)-(\varepsilon+\alpha_2+\mu)T(t),\\
R'(t)=\varepsilon T(t)-(\omega_1+\mu)R(t),\\
V'(t)=\varphi S(t)-(\omega_2+\mu)V(t),\\
B'(t)=\eta I(t)-dB(t).
\end{cases}
\end{align}

% ------------------------------------------------------------------

\section{Model analysis}
\label{Sec:mod:analysis_vaccination}

Throughout the paper, we assume that the initial conditions 
of system \eqref{ModeloColera_vaccination} are non-negative:
\begin{equation}
\label{eq:init:cond_vaccination}
S(0) = S_0 \geq 0, \  I(0) = I_0 \geq 0, 
\  T(0)= T_0 \geq 0, \  R(0) = R_0 \geq 0, 
\  V(0) = V_0 \geq 0, \  B(0) = B_0 \geq 0. 
\end{equation}

% -----------

\subsection{Positivity and boundedness of solutions}
\label{subsec:positivity_boundedness}

Our first lemma shows that the considered model 
\eqref{ModeloColera_vaccination}--\eqref{eq:init:cond_vaccination} 
is biologically meaningful.

\begin{lemma}
The solutions $(S(t), I(t), T(t), R(t), V(t), B(t))$ 
of system \eqref{ModeloColera_vaccination} 
are non-negative for all $t \geq 0$ 
with non-negative initial conditions 
\eqref{eq:init:cond_vaccination} in $(\mathbb{R}_0^+)^6$. 
\end{lemma}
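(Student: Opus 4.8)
The plan is to show that the closed non-negative orthant $(\mathbb{R}_0^+)^6$ is positively invariant under the flow of \eqref{ModeloColera_vaccination}. First I would record that the right-hand side $f$ of \eqref{ModeloColera_vaccination} is continuously differentiable, hence locally Lipschitz, on an open neighbourhood of $(\mathbb{R}_0^+)^6$: the only non-polynomial term, $\beta B/(\kappa+B)$, is smooth there because $\kappa>0$ keeps the denominator bounded away from zero whenever $B\ge 0$. By the Picard--Lindelöf theorem there is then a unique solution on a maximal interval $[0,T_{\max})$, and it suffices to prove non-negativity on that interval.

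Next I would verify the tangency (quasi-positivity) condition on each coordinate face, namely that $f_i(x)\ge 0$ whenever $x\in(\mathbb{R}_0^+)^6$ and $x_i=0$. Each equation has the form $x_i' = P_i - Q_i\,x_i$, where the loss rate $Q_i$ is continuous and the production term $P_i$ is a non-negative combination of the remaining variables. Setting the relevant coordinate to zero yields $S'=\Lambda+\omega_1 R+\omega_2 V\ge\Lambda>0$, $I'=\tfrac{\beta B}{\kappa+B}S\ge 0$, $T'=\delta I\ge 0$, $R'=\varepsilon T\ge 0$, $V'=\varphi S\ge 0$ and $B'=\eta I\ge 0$, each non-negative because the other coordinates are non-negative. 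This is the concrete, routine part.

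The core of the argument is to upgrade this boundary condition to genuine invariance by a first-exit-time contradiction. Suppose some coordinate became negative and let $t_1\in[0,T_{\max})$ be the infimum of such times; by continuity every coordinate is $\ge 0$ on $[0,t_1]$ and some coordinate $x_i$ satisfies $x_i(t_1)=0$ while $x_i(t_n)<0$ along a sequence $t_n\downarrow t_1$. I expect the main obstacle to be the degenerate case in which several compartments vanish simultaneously (for instance $I=B=0$, forcing $I'=B'=0$): there the naive sign test gives only $x_i'(t_1)\ge 0$, which is not enough to exclude $x_i$ dipping below zero. To close this gap cleanly I would introduce a perturbation: for $\sigma>0$ replace $f$ by $f+\sigma\mathbf{1}$, so that on the boundary $x_i'=f_i(x)+\sigma\ge\sigma>0$ strictly, which makes the first-exit contradiction immediate for the perturbed trajectory regardless of how many coordinates vanish at $t_1$. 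Letting $\sigma\to 0^+$ and invoking continuous dependence of solutions on the vector field then identifies the original trajectory as a locally uniform limit of non-negative functions, hence non-negative on $[0,T_{\max})$. This is precisely the standard Nagumo-type invariance criterion for essentially non-negative systems.

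As an alternative to the perturbation step, one may integrate each equation explicitly: on any interval where all variables are non-negative, the integrating factor $\exp(\int_0^t Q_i)$ gives $x_i(t)=\exp(-\int_0^t Q_i)\big(x_i(0)+\int_0^t \exp(\int_0^s Q_i)\,P_i(s)\,ds\big)$, a manifestly non-negative expression since $x_i(0)\ge 0$ and $P_i\ge 0$; combined with the first-exit argument this yields the same conclusion, and is the observation already used in the authors' earlier model \cite{Lemos-Paiao}.
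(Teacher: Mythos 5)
Your proposal is correct and rests on exactly the same core step as the paper's proof: the quasi-positivity check that each component of the vector field is non-negative on the corresponding face $x_i=0$ of $(\mathbb{R}_0^+)^6$, with computations identical to those displayed in the paper. The only difference is how that check is converted into invariance: the paper simply invokes Lemma~2 of \cite{Yang:CMA:1996}, whereas you prove the criterion yourself via the $\sigma$-perturbation, first-exit-time contradiction and continuous-dependence limit, which makes your argument self-contained but otherwise equivalent.
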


\begin{proof}
We have
\begin{equation*}
\begin{cases}
\frac{d S(t)}{d t}\Bigg|_{\xi(S)} &= \Lambda + \omega_1 R(t) + \omega_2 V(t)> 0 \, , \\
\frac{d I(t)}{d t}\Bigg|_{\xi(I)} &= \frac{\beta B(t)}{\kappa+B(t)}S(t) \geq 0 \, , \\
\frac{d T(t)}{d t}\Bigg|_{\xi(Q)} &= \delta I(t) \geq 0 \, , \\
\frac{d R(t)}{d t}\Bigg|_{\xi(R)} &= \varepsilon T(t) \geq 0 \, , \\
\frac{d V(t)}{d t}\Bigg|_{\xi(V)} &= \varphi S(t) \geq 0 \, , \\
\frac{d B(t)}{d t}\Bigg|_{\xi(B)} &= \eta I(t) \geq 0 \, ,
\end{cases}
\end{equation*}
where $\xi(\upsilon)=\left\{\upsilon(t)=0 \text{ and } 
S, I, T, R, V, B \in C(\mathbb{R}_0^+,\mathbb{R}_0^+)\right\}$
and $\upsilon \in \{S, I, T, R, V, B\}$.
Therefore, due to Lemma~2 in \cite{Yang:CMA:1996}, 
any solution of system \eqref{ModeloColera_vaccination} is such that 
$(S(t), I(t), T(t), R(t), V(t), B(t)) \in (\mathbb{R}_0^+)^6$
for all $t \geq 0$. 
\end{proof}

Lemma~\ref{lema:Inv_vaccination} shows that it is enough 
to consider the dynamics of the flow generated by 
\eqref{ModeloColera_vaccination}--\eqref{eq:init:cond_vaccination} 
in a certain region $\Omega_V$. 

\begin{lemma}
\label{lema:Inv_vaccination}
Let 
\begin{equation}
\label{eqN:1_vaccination}
\Omega_H = \left\{ (S, I, T, R, V) \in \left(\mathbb{R}_0^+\right)^5 \, | \, 0 
\leq S(t) + I(t) + T(t) + R(t) + V(t)\leq \frac{\Lambda}{\mu} \right\} 
\end{equation}
and
\begin{equation}
\label{eq:maj:B}
\Omega_B = \left\{ B \in \mathbb{R}_0^+ \, | \, 0 \leq B(t)  
\leq \frac{\Lambda\eta}{\mu d} \right\}.
\end{equation}
Define 
\begin{equation}
\label{eq:Omega_vaccination}
\Omega_V = \Omega_H \times \Omega_B.
\end{equation}
If $N(0) \leq \displaystyle \frac{\Lambda}{\mu}$ and 
$B(0) \leq \displaystyle \frac{\Lambda\eta}{\mu d}$, 
then the region $\Omega_V$ is positively invariant for model \eqref{ModeloColera_vaccination} 
with non-negative initial conditions \eqref{eq:init:cond_vaccination} in $(\mathbb{R}_0^+)^6$.
\end{lemma}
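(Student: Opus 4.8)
The plan is to reduce the invariance of the product region $\Omega_V=\Omega_H\times\Omega_B$ to two scalar differential inequalities, one for the total human population $N(t):=S(t)+I(t)+T(t)+R(t)+V(t)$ and one for the bacterial concentration $B(t)$, and then to close each of them by a standard comparison argument. The non-negativity of every component is already guaranteed by the preceding lemma, so only the upper bounds $N(t)\le \Lambda/\mu$ and $B(t)\le \Lambda\eta/(\mu d)$ require work.

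First I would add the five human equations of \eqref{ModeloColera_vaccination}. All the internal transfer terms --- the force of infection $\frac{\beta B}{\kappa+B}S$, the immunity-loss terms $\omega_1 R$ and $\omega_2 V$, the vaccination term $\varphi S$, and the treatment/recovery transfers $\delta I$ and $\varepsilon T$ --- cancel in pairs, leaving
\begin{equation*}
N'(t) = \Lambda - \mu N(t) - \alpha_1 I(t) - \alpha_2 T(t).
\end{equation*}
Since $\alpha_1,\alpha_2\ge 0$ and $I(t),T(t)\ge 0$ by the non-negativity lemma, this yields the differential inequality $N'(t)\le \Lambda-\mu N(t)$.

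Next I would integrate this inequality using the integrating factor $e^{\mu t}$ (equivalently, compare against the scalar ODE $y'=\Lambda-\mu y$), obtaining
\begin{equation*}
N(t) \le \frac{\Lambda}{\mu} + \left(N(0)-\frac{\Lambda}{\mu}\right)e^{-\mu t}.
\end{equation*}
Under the hypothesis $N(0)\le \Lambda/\mu$ the bracketed coefficient is non-positive, so $N(t)\le \Lambda/\mu$ for all $t\ge 0$; together with $N(t)\ge 0$ this places the human state in $\Omega_H$.

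Finally, since $I(t)$ is one of the non-negative summands of $N(t)$, we have $I(t)\le N(t)\le \Lambda/\mu$, so the last equation of \eqref{ModeloColera_vaccination} gives $B'(t)=\eta I(t)-dB(t)\le \frac{\eta\Lambda}{\mu}-dB(t)$. The same comparison argument, now against $y'=\frac{\eta\Lambda}{\mu}-dy$ under the hypothesis $B(0)\le \Lambda\eta/(\mu d)$, yields $B(t)\le \Lambda\eta/(\mu d)$ for all $t\ge 0$, placing $B$ in $\Omega_B$ and hence the full state in $\Omega_V$. There is no substantial obstacle here; the only steps requiring a little care are the cancellation bookkeeping when summing the five equations and the observation $I\le N$, which is precisely what decouples the $B$-bound from the rest of the system.
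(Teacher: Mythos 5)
Your proof is correct and follows essentially the same route as the paper: sum the five human equations to obtain $N'(t)\le \Lambda-\mu N(t)$, close this by comparison under $N(0)\le \Lambda/\mu$, and then bound $B(t)$. The only difference is one of self-containedness: where the paper delegates the pathogen bound $B(t)\le \Lambda\eta/(\mu d)$ to a citation of an earlier work, you derive it explicitly from $I(t)\le N(t)\le \Lambda/\mu$ and a second comparison argument, which fills in exactly the detail the paper leaves to the reference.
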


\begin{proof}
Let us split system \eqref{ModeloColera_vaccination} into two parts: 
the human population, i.e., $S(t)$, $I(t)$, $T(t)$, $R(t)$ and $V(t)$, 
and the pathogen population, i.e., $B(t)$.
Adding the first five equations of system \eqref{ModeloColera_vaccination} gives
\begin{equation*}
N'(t) = S'(t)+I'(t)+T'(t)+R'(t)+V'(t)=\Lambda-\mu N(t)
-\alpha_1 I(t)-\alpha_2 T(t) \leq \Lambda-\mu N(t) \, .
\end{equation*}
Assuming that $N(0) \leq \frac{\Lambda}{\mu}$, we conclude
that $N(t) \leq  \frac{\Lambda}{\mu}$. For this reason,
\eqref{eqN:1_vaccination} defines the biologically feasible region 
for the human population. As it is proved in \cite{Lemos-Paiao}, 
the region \eqref{eq:maj:B} defines the biologically feasible region 
for the pathogen population. From \eqref{eqN:1_vaccination} and \eqref{eq:maj:B}, 
we know that $N(t)$ and $B(t)$ are bounded for all $t \geq 0$. 
Therefore, every solution of system \eqref{ModeloColera_vaccination} 
with initial conditions in $\Omega_V$ remains in $\Omega_V$ for all $t \geq 0$. 
In other words, in region $\Omega_V$ defined by \eqref{eq:Omega_vaccination}, 
our model is epidemiologically and mathematically well posed 
in the sense of \cite{Hethcote}. 
\end{proof}

% -----------

\subsection{Equilibrium points and basic reproduction number}
\label{subsec:eqpoints_r0}

From now on, let us consider that $a_0=\varphi+\mu$, $a_1=\delta+\alpha_1+\mu$, 
$a_2=\varepsilon+\alpha_2+\mu$, $a_3=\omega_1+\mu$ and $a_4=\omega_2+\mu$. 
The disease-free equilibrium (DFE) of model \eqref{ModeloColera_vaccination} 
is given by
\begin{equation}
\label{eq:DFE_vaccination}
E^0=\left(S^0,I^0,T^0,R^0,V^0,B^0\right)
=\left(\frac{\Lambda a_4}{a_0a_4-\varphi\omega_2},0,0,0,
\frac{\Lambda\varphi}{a_0a_4-\varphi\omega_2},0\right).
\end{equation}

\begin{remark}
\label{remark:a0a4_positivos}
Note that, because $\mu>0$, one has 
$a_0a_4-\varphi\omega_2=(\varphi+\mu)(\omega_2+\mu)-\varphi\omega_2>0$. 
\end{remark}

Next, following the approach of \cite{Mwasa,Driessche},
we compute the basic reproduction number $R_0$.

\begin{proposition}[Basic reproduction number of \eqref{ModeloColera_vaccination}]
\label{prop:R0_vaccination}
The basic reproduction number of model \eqref{ModeloColera_vaccination} is given by
\begin{align}
\label{R0_vaccination}
R_0=\frac{\beta\Lambda\eta a_4}{(a_0a_4-\varphi\omega_2)\kappa da_1}.
\end{align}
\end{proposition}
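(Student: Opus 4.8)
The plan is to compute $R_0$ via the next-generation matrix method of \cite{Driessche}, exactly as announced before the statement. The first task is to identify the ``infected'' compartments of the model: these are the variables that vanish at the disease-free equilibrium and through which new infection propagates, namely $I$ and $B$. The remaining compartments $S$, $T$, $R$, $V$ are non-infected (note in particular that $T$ and $R$ are populated only through $I$ and so do not generate new infections directly). I would therefore restrict attention to the subsystem governing $(I,B)$ and split its right-hand side into a ``new infections'' vector $\mathcal{F}$ and a ``transition'' vector $\mathcal{V}$.

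Concretely, for the $(I,B)$ subsystem I would set
\begin{equation*}
\mathcal{F}=\begin{pmatrix}\dfrac{\beta B}{\kappa+B}S\\[1ex]0\end{pmatrix},
\qquad
\mathcal{V}=\begin{pmatrix}a_1 I\\[1ex] dB-\eta I\end{pmatrix},
\end{equation*}
placing the incidence term in $\mathcal{F}$ (it represents susceptibles newly entering $I$) and collecting the outflow from $I$, together with the bacterial dynamics, in $\mathcal{V}$. The next step is to linearise at the DFE $E^0$ given in \eqref{eq:DFE_vaccination}, forming the Jacobian matrices $F=\partial\mathcal{F}/\partial(I,B)$ and $\mathcal{J}=\partial\mathcal{V}/\partial(I,B)$ evaluated at $E^0$. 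Since $B^0=0$, the derivative of $\beta B/(\kappa+B)$ with respect to $B$ reduces to $\beta/\kappa$, so that
\begin{equation*}
F=\begin{pmatrix}0 & \dfrac{\beta S^0}{\kappa}\\[1ex]0 & 0\end{pmatrix},
\qquad
\mathcal{J}=\begin{pmatrix}a_1 & 0\\[1ex]-\eta & d\end{pmatrix},
\end{equation*}
where $S^0=\Lambda a_4/(a_0a_4-\varphi\omega_2)$.

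The remainder is a direct computation: $R_0$ is the spectral radius of the next-generation matrix $F\mathcal{J}^{-1}$. I would invert $\mathcal{J}$ (its determinant $a_1 d$ is nonzero, since $a_1,d>0$, and positivity of the denominator of $S^0$ is guaranteed by Remark~\ref{remark:a0a4_positivos}), multiply out $F\mathcal{J}^{-1}$, and read off its dominant eigenvalue. Because $F$ has rank one, $F\mathcal{J}^{-1}$ is a $2\times2$ matrix with a single nonzero eigenvalue, which is immediately seen to equal
\begin{equation*}
\frac{\beta S^0}{\kappa}\cdot\frac{\eta}{a_1 d}
=\frac{\beta\Lambda\eta a_4}{(a_0a_4-\varphi\omega_2)\kappa d a_1},
\end{equation*}
matching \eqref{R0_vaccination}. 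I do not anticipate a genuine obstacle here; the only point demanding care is the correct assignment of terms between $\mathcal{F}$ and $\mathcal{V}$, since a misallocation (for instance placing the bacterial shedding term $\eta I$ into $\mathcal{F}$) would produce a different and incorrect threshold. Verifying that $\mathcal{V}$ is a nonsingular $M$-matrix and that $\mathcal{F}\ge 0$ confirms that the hypotheses of the next-generation theorem are met, after which the formula follows mechanically.
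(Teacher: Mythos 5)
Your proposal is correct, and it reaches \eqref{R0_vaccination} by a leaner route than the paper's own proof. Both you and the paper invoke the next-generation matrix framework of van den Driessche and Watmough, with the same assignment of terms: the incidence $\beta B S/(\kappa+B)$ is the only ``new infection'', while the shedding term $\eta I$ is treated as a transition (this matches the paper's $\mathcal{V}^+$, and, as you note, is the choice on which the value of the threshold hinges). The difference is in execution. The paper keeps all six compartments, forms $6\times6$ Jacobians $F_0$ and $V_0$ at the DFE, and computes $\rho(F_0V_0^{-1})$ ``with the help of the computer algebra system Maple''. You instead restrict to the infected compartments $(I,B)$, which is in fact the formulation under which the hypotheses of the next-generation theorem are stated, and obtain the $2\times2$ matrices
\begin{equation*}
F=\begin{pmatrix}0 & \beta S^0/\kappa\\ 0 & 0\end{pmatrix},
\qquad
V=\begin{pmatrix}a_1 & 0\\ -\eta & d\end{pmatrix},
\end{equation*}
so that $FV^{-1}$ has rank one and its unique nonzero eigenvalue $\beta S^0\eta/(\kappa a_1 d)$ can be read off by hand; substituting $S^0=\Lambda a_4/(a_0a_4-\varphi\omega_2)$ gives \eqref{R0_vaccination}. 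What your approach buys is a fully transparent, hand-checkable computation together with explicit verification of the theorem's hypotheses ($F\ge0$, $V$ a nonsingular $M$-matrix); what the paper's full-system computation buys is that one never has to argue which compartments count as infected, at the price of delegating the algebra to Maple. The two computations agree here because the paper's $F_0$ has a single nonzero entry, so its spectral radius collapses to exactly the $2\times2$ block you isolated.
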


\begin{proof}
Consider that $\mathcal{F}_i(t)$ is the rate of appearance 
of new infections in the compartment associated with index $i$, 
$\mathcal{V}_i^+(t)$ is the rate of transfer of ``individuals'' 
into the compartment associated with index $i$ by all other means 
and $\mathcal{V}_i^-(t)$ is the rate of transfer of ``individuals'' 
out of compartment associated with index $i$. In this way, the matrices 
$\mathcal{F}(t)$, $\mathcal{V}^+(t)$ and $\mathcal{V}^-(t)$, 
associated with model \eqref{ModeloColera_vaccination}, are given by
\begin{align*}
\mathcal{F}(t)=\left[
\begin{matrix}
0\\
\displaystyle\frac{\beta B(t)S(t)}{\kappa+B(t)}\\
0\\
0\\
0\\
0
\end{matrix}
\right],
\quad \mathcal{V}^+(t)=\left[
\begin{matrix}
\Lambda+\omega_1 R(t)+\omega_2V(t)\\
0\\
\delta I(t)\\
\varepsilon T(t)\\
\varphi S(t)\\
\eta I(t)
\end{matrix}
\right]
\quad \text{ and } \quad
\mathcal{V}^-(t)=\left[
\begin{matrix}
\displaystyle\frac{\beta B(t)S(t)}{\kappa+B(t)}+a_0S(t)\\
a_1 I(t)\\
a_2 T(t)\\
a_3 R(t)\\
a_4 V(t)\\
d B(t)
\end{matrix}\right].
\end{align*}
Therefore, by considering 
$\mathcal{V}(t)=\mathcal{V}^-(t)-\mathcal{V}^+(t)$,
we have that 
\begin{align*}
\left[
\begin{matrix}
S'(t) & I'(t) & T'(t) & R'(t) & V'(t) & B'(t)
\end{matrix}\right]^T=\mathcal{F}(t)-\mathcal{V}(t).
\end{align*}
The Jacobian matrices of $\mathcal{F}(t)$ 
and of $\mathcal{V}(t)$ are, respectively, given by
\begin{align*}
F=\left[\begin{matrix}
0 & 0 & 0 & 0 & 0 & 0 \\
\displaystyle\frac{\beta B(t)}{\kappa+B(t)} 
& 0 & 0 & 0 & 0 &
\displaystyle\frac{\beta\kappa S(t)}{(\kappa+B(t))^2} \\
0 & 0 & 0 & 0 & 0 & 0 \\
0 & 0 & 0 & 0 & 0 & 0 \\
0 & 0 & 0 & 0 & 0 & 0 \\
0 & 0 & 0 & 0 & 0 & 0
\end{matrix}\right]
\text{ and }
V=\left[\begin{matrix}
\displaystyle\frac{\beta B(t)}{\kappa+B(t)}+a_0 
& 0 & 0 & -\omega_1 & -\omega_2 & 
\displaystyle\frac{\beta\kappa S(t)}{(\kappa+B(t))^2} \\
0 & a_1 & 0 & 0 & 0 & 0 \\
0 & -\delta & a_2 & 0 & 0 & 0 \\
0 & 0 & -\varepsilon & a_3 & 0 & 0 \\
-\varphi & 0 & 0 & 0 & a_4 & 0 \\
0 & -\eta & 0 & 0 & 0 & d \\
\end{matrix}\right].
\end{align*}
In the disease-free equilibrium $E^0$ defined by \eqref{eq:DFE_vaccination},  
we obtain the matrices $F_0$ and $V_0$ given by
\begin{align*}
F_0=
\left[\begin{matrix}
0 & 0 & 0 & 0 & 0 & 0 \\
0 & 0 & 0 & 0 & 0 &
\displaystyle\frac{\beta\Lambda a_4}{(a_0a_4-\varphi\omega_2)\kappa} \\
0 & 0 & 0 & 0 & 0 & 0 \\
0 & 0 & 0 & 0 & 0 & 0 \\
0 & 0 & 0 & 0 & 0 & 0 \\
0 & 0 & 0 & 0 & 0 & 0
\end{matrix}\right]
\quad \text{ and } \quad
V_0=
\left[\begin{matrix}
a_0 & 0 & 0 & -\omega_1 & -\omega_2 & 
\displaystyle\frac{\beta\Lambda a_4}{(a_0a_4-\varphi\omega_2)\kappa} \\
0 & a_1 & 0 & 0 & 0 & 0 \\
0 & -\delta & a_2 & 0 & 0 & 0 \\
0 & 0 & -\varepsilon & a_3 & 0 & 0 \\
-\varphi & 0 & 0 & 0 & a_4 & 0 \\
0 & -\eta & 0 & 0 & 0 & d \\
\end{matrix}\right].
\end{align*}
The basic reproduction number 
of model \eqref{ModeloColera_vaccination} is then given by
\begin{align*}
R_0=\rho(F_0V_0^{-1})
=\frac{\beta\Lambda\eta a_4}{(a_0a_4-\varphi\omega_2)\kappa da_1},
\end{align*}
found with the help of the computer algebra system \textsf{Maple}.
This concludes the proof.
\end{proof}

Now we prove the existence of an endemic equilibrium when 
$R_0$ given by \eqref{R0_vaccination} is greater than one.

\begin{proposition}[Endemic equilibrium]
\label{prop:EE_vaccination}
If the basic reproduction number \eqref{R0_vaccination}
is such that $R_0>1$, then the model \eqref{ModeloColera_vaccination} 
has an endemic equilibrium given by
\begin{equation}
\label{EndemicEquilibrium_vaccination}
E^*=(S^*,I^*,T^*,R^*,V^*,B^*),
\end{equation}
where
\begin{equation*}
\begin{cases}
S^*=\displaystyle\frac{a_1a_4\{\kappa d(a_1a_2a_3-\delta\varepsilon\omega_1)
+\Lambda\eta a_2a_3\}}{\eta\tilde{D}}\\[0.3cm]
I^*=\displaystyle\frac{a_2a_3\{\beta\Lambda\eta a_4
-(a_0a_4-\varphi\omega_2)\kappa da_1\}}{\eta\tilde{D}}\\[0.3cm]
T^*=\displaystyle\frac{a_3\delta\{\beta\Lambda\eta a_4
-(a_0a_4-\varphi\omega_2)\kappa da_1\}}{\eta\tilde{D}}\\[0.3cm]
R^*=\displaystyle\frac{\delta\epsilon\{\beta\Lambda\eta a_4
-(a_0a_4-\varphi\omega_2)\kappa da_1\}}{\eta\tilde{D}}\\[0.3cm]
V^*=\displaystyle\frac{a_1\varphi\{\kappa d(a_1a_2a_3
-\delta\varepsilon\omega_1)+\Lambda\eta a_2a_3\}}{\eta\tilde{D}}\\[0.3cm]
B^*=\displaystyle\frac{a_2a_3\{\beta\Lambda\eta a_4
-(a_0a_4-\varphi\omega_2)\kappa da_1\}}{d\tilde{D}}
\end{cases}
\end{equation*}
and $\tilde{D}=\left\{a_1a_2a_3(a_0a_4-\varphi\omega_2)
+\beta a_4(a_1a_2a_3-\delta\varepsilon\omega_1)\right\}$.
\end{proposition}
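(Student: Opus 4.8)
The plan is to locate all equilibria by setting the right-hand side of \eqref{ModeloColera_vaccination} to zero and then single out the endemic branch on which $I^* \neq 0$. Writing $P = a_0a_4 - \varphi\omega_2 > 0$ (positive by Remark~\ref{remark:a0a4_positivos}), the four ``downstream'' equilibrium equations solve immediately in terms of $I^*$: the $B$-, $T$- and $R$-equations give $B^* = \eta I^*/d$, $T^* = \delta I^*/a_2$ and $R^* = \varepsilon\delta I^*/(a_2a_3)$, while the $V$-equation gives $V^* = \varphi S^*/a_4$.

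First I would exploit the $B$-substitution to linearise the force of infection: since $B^* = \eta I^*/d$, one has $\beta B^*/(\kappa + B^*) = \beta\eta I^*/(\kappa d + \eta I^*)$, so the $I$-equation reads $\beta\eta I^* S^*/(\kappa d + \eta I^*) = a_1 I^*$. On the endemic branch I may cancel $I^* \neq 0$, obtaining $S^*$ as an affine function of $I^*$, namely $S^* = a_1(\kappa d + \eta I^*)/(\beta\eta)$. Next I would substitute these expressions, together with the relation $\beta B^* S^*/(\kappa+B^*) = a_1 I^*$, into the $S$-equation. Eliminating $V^*$ via $V^* = \varphi S^*/a_4$ collapses $\omega_2 V^* - a_0 S^*$ into $-(P/a_4)S^*$, and inserting the affine form of $S^*$ leaves a single scalar linear equation in $I^*$. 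Its unique solution is $I^*$; clearing denominators (multiplying through by $a_2a_3a_4\beta\eta$) yields the stated numerator $a_2a_3\{\beta\Lambda\eta a_4 - P\kappa d a_1\}$ over the denominator $\eta\tilde{D}$, once one recognises that the coefficient of $I^*$ reorganises exactly into $\tilde{D} = a_1a_2a_3 P + \beta a_4(a_1a_2a_3 - \delta\varepsilon\omega_1)$. Back-substituting this $I^*$ into the downstream relations reproduces $B^*$, $T^*$ and $R^*$ verbatim, and feeding it into $S^* = a_1(\kappa d + \eta I^*)/(\beta\eta)$ — followed by $V^* = \varphi S^*/a_4$ — gives the remaining two coordinates; uniqueness of the endemic equilibrium is automatic, since the reduced equation for $I^*$ is linear with nonzero leading coefficient.

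The main obstacle is purely algebraic and lies in this last simplification. The coefficient of $I^*$ and the constant term must be massaged so that the $\kappa d\,a_1a_2a_3 P$ contributions cancel in the $S^*$ numerator, leaving the clean factor $\kappa d(a_1a_2a_3 - \delta\varepsilon\omega_1) + \Lambda\eta a_2a_3$; this cancellation is exactly what makes the closed forms tractable and is best tracked with care (or confirmed via a computer algebra system).

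Finally, the role of the hypothesis $R_0 > 1$ is isolated at the very end. From \eqref{R0_vaccination}, the condition $R_0 > 1$ is equivalent to $\beta\Lambda\eta a_4 - P\kappa d a_1 > 0$, which is precisely the factor appearing in $I^*$, $T^*$, $R^*$ and $B^*$, so these four coordinates are strictly positive. Since each $a_i$ strictly dominates its constituent rate ($a_1 > \delta$, $a_2 > \varepsilon$, $a_3 > \omega_1$, because $\mu > 0$), one has $a_1a_2a_3 > \delta\varepsilon\omega_1$, whence $\tilde{D} > 0$ and the bracket $\kappa d(a_1a_2a_3 - \delta\varepsilon\omega_1) + \Lambda\eta a_2a_3 > 0$; thus $S^*$ and $V^*$ are positive unconditionally, and $E^* \in (\mathbb{R}_0^+)^6$ is a genuine endemic equilibrium.
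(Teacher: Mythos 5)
Your proof is correct, and it takes a genuinely different --- and more complete --- route than the paper's own argument. The paper never derives the closed-form expressions for $E^*$ from the equilibrium equations: it treats the displayed formulas as given (obtained, like $R_0$, via computer algebra), lists the sign facts $a_0a_4-\varphi\omega_2>0$, $a_1a_2a_3-\delta\varepsilon\omega_1>0$, $\tilde{D}>0$, rewrites the common numerator factor as $\beta\Lambda\eta a_4-(a_0a_4-\varphi\omega_2)\kappa d a_1=(a_0a_4-\varphi\omega_2)\kappa d a_1(R_0-1)$, and then reads off that $I^*,B^*>0$ exactly when $R_0>1$. Your elimination argument --- solving the downstream equations for $B^*,T^*,R^*,V^*$, cancelling $I^*\neq0$ in the $I$-equation to make $S^*$ affine in $I^*$, and collapsing the $S$-equation into a single linear equation whose coefficient reorganises into $\eta\tilde{D}$ --- supplies precisely the derivation the paper omits, and it buys something the paper's verification-style proof cannot: uniqueness of the endemic equilibrium, since the reduced equation for $I^*$ is linear with nonzero leading coefficient. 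Both proofs then share the same positivity bookkeeping, including the equivalence $R_0>1\Leftrightarrow\beta\Lambda\eta a_4-(a_0a_4-\varphi\omega_2)\kappa da_1>0$.

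One small correction to your final paragraph: the paper only assumes $\delta,\varepsilon,\varphi\geq0$, so you may conclude $T^*,R^*,V^*\geq0$ but not strict positivity (e.g.\ $\delta=0$ forces $T^*=R^*=0$, and $\varphi=0$ forces $V^*=0$); this is why the paper is careful to write $T^*,R^*\geq0$. The endemic character of $E^*$ rests only on $I^*,B^*>0$, which your argument establishes correctly, so this does not affect the validity of the conclusion $E^*\in(\mathbb{R}_0^+)^6$.
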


\begin{proof}
We note that
\begin{enumerate}
\item $a_1=\delta+\alpha_1+\mu>0$, because $\delta$, $\alpha_1\geq0$ and $\mu>0$;
\item $a_2=\varepsilon+\alpha_2+\mu>0$, because $\varepsilon$, $\alpha_2\geq0$ and $\mu>0$;
\item $a_3=\omega_1+\mu>0$, because $\omega_1\geq0$ and $\mu>0$;
\item $\beta$, $\kappa$, $d>0$ and $\varphi\geq0$;
\item $a_0a_4-\varphi\omega_2>0$ (see Remark~\ref{remark:a0a4_positivos});
\item $a_1a_2a_3-\delta\varepsilon\omega_1=(\delta+\alpha_1
+\mu)(\varepsilon+\alpha_2+\mu)(\omega_1+\mu)-\delta\varepsilon\omega_1>0$, 
because $\alpha_1$, $\alpha_2\geq0$ and $\mu>0$;
\item $\Lambda\eta a_2a_3>0$, because $\Lambda$, $\eta$, $a_2$, $a_3>0$.
\end{enumerate}	
With the above inequalities, we conclude that $\tilde{D}>0$ 
and, consequently, that $S^*>0$ and $V^*\geq0$.
The basic reproduction number is given by 
$\displaystyle\frac{\beta\Lambda\eta a_4}{(a_0
a_4-\varphi\omega_2)\kappa da_1}$. Thus, it follows that
\begin{align*}
&\ \beta\Lambda\eta a_4=R_0(a_0a_4-\varphi\omega_2)\kappa da_1\\
\Leftrightarrow &\ \beta\Lambda\eta a_4-(a_0a_4-\varphi\omega_2)\kappa da_1
=R_0(a_0a_4-\varphi\omega_2)\kappa da_1-(a_0a_4-\varphi\omega_2)\kappa da_1\\
\Leftrightarrow &\ \beta\Lambda\eta a_4-(a_0a_4-\varphi\omega_2)\kappa da_1
=(a_0a_4-\varphi\omega_2)\kappa da_1(R_0-1).
\end{align*}
Therefore, we have that
\begin{equation*}
\begin{cases}
I^*=\displaystyle\frac{a_1a_2a_3\kappa 
d(a_0a_4-\varphi\omega_2)(R_0-1)}{\eta\tilde{D}}\\[0.3cm]
T^*=\displaystyle\frac{a_1a_3\kappa d\delta(a_0 a_4
-\varphi\omega_2)(R_0-1)}{\eta\tilde{D}}\\[0.3cm]
R^*=\displaystyle\frac{a_1\kappa d\delta\varepsilon(a_0
a_4-\varphi\omega_2)(R_0-1)}{\eta\tilde{D}}\\[0.3cm]
B^*=\displaystyle\frac{a_1a_2a_3\kappa(a_0
a_4-\varphi\omega_2)(R_0-1)}{\tilde{D}}.
\end{cases}
\end{equation*}
In order to obtain an endemic equilibrium, we have to ensure that $I^*$, $B^*>0$. 
Thus, we obtain $I^*$, $B^*>0$ if and only if $R_0-1>0\Leftrightarrow R_0>1$. 
In this case ($R_0>1$), we also have that $T^*$, $R^*\geq0$. 
\end{proof}

% -----------

\subsection{Sensitivity of the basic reproduction number}
\label{subsec:sensitivity_r0}

In this section, we are going to study the sensitivity of $R_0$ 
with respect to all parameters $p$ of model \eqref{ModeloColera_vaccination}, 
computing the respective normalized forward sensitive indexes $\Upsilon^{R_0}_p$ given in
Definition~\ref{def:sensitivity_indexes}. They are presented in Table~\ref{Tab:R0_sensitivity}.

\begin{definition}[See \cite{Chitnis,Kong,Silva}]
\label{def:sensitivity_indexes}
The normalized forward sensitivity index of a variable $z$ that depends 
differentiably on a parameter $p$ is defined by
\begin{equation*}
\Upsilon^z_p=\frac{\partial z}{\partial p}\times\frac{p}{|z|}.
\end{equation*}
\end{definition}

\begin{remark}
When a parameter $p$ is one of the most sensitive parameters with respect to a variable $z$, 
then we have $\Upsilon^z_p=\pm1$. If $\Upsilon^z_p=1$, then an increase (decrease) 
of $p$ by $\gamma\%$ provokes an increase (decrease) of $z$ by $\gamma\%$. On the other hand, 
if $\Upsilon^z_p=-1$, then an increase (decrease) of $p$ by $\gamma\%$ provokes 
a decrease (increase) of $z$ by $\gamma\%$ (see \cite{Silva}).
\end{remark}
% --------------------------------
\begin{table}[ht!]
\doublerulesep 0.1pt
\tabcolsep 7.8mm
\centering
\caption{\rm The normalized forward sensitivity indexes $\Upsilon^{R_0}_p$ 
with respect to all parameters of model 
\eqref{ModeloColera_vaccination}.}\label{Tab:R0_sensitivity}
\vspace*{2mm}
\renewcommand{\arraystretch}{1.3}
\setlength{\tabcolsep}{10pt}
\footnotesize
{\begin{tabular*}{6.5cm}{cc}
\hline\hline\hline
\raisebox{-2ex}[0pt][0pt]{Parameter $p$} & \raisebox{-2ex}[0pt][0pt]{$\Upsilon^{R_0}_p$}\\
\\
\hline
\small $\Lambda$ & \small 1 \\
\small $\mu$ & \small $\mu\left(\frac{1}{a_4}-\frac{1}{a_1}
-\frac{\varphi+\omega_2+2\mu}{a_0a_4-\varphi\omega_2}\right)$ \\
\small $\beta$ & \small 1 \\
\small $\kappa$ & \small -1 \\
\small $\omega_1$ & \small 0 \\
\small $\omega_2$ & \small $\frac{\varphi\omega_2}{a_4(\varphi+\omega_2+\mu)}$ \\
\small $\varphi$ & \small $-\frac{\varphi}{\varphi+\omega_2+\mu}$ \\
\small $\delta$ & \small $-\frac{\delta}{a_1}$ \\
\small $\varepsilon$ & \small 0 \\
\small $\alpha_1$ & \small $-\frac{\alpha_1}{a_1}$ \\
\small $\alpha_2$ & \small  0 \\
\small $\eta$ & \small 1 \\
\small $d$ & \small -1 \\
\hline\hline\hline
\end{tabular*}
}
\renewcommand{\arraystretch}{1}
\end{table}

% -----------

\subsection{Stability analysis}
\label{subsec:stability_analysis}

Now we prove the local stability of the disease-free equilibrium $E^0$. 

\begin{theorem}[Stability of the DFE \eqref{eq:DFE_vaccination}] 
The disease-free equilibrium $E^0$ of model \eqref{ModeloColera_vaccination} is
\begin{enumerate}
\item locally asymptotic stable, if $R_0<1$;
\item unstable, if $R_0>1$.
\end{enumerate}
Moreover, if $R_0=1$, then a critical case occurs.
\end{theorem}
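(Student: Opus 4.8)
The plan is to linearise \eqref{ModeloColera_vaccination} about $E^0$ and read off the three cases from the spectrum of the Jacobian $J(E^0)$. Ordering the variables as $(S,V,I,T,R,B)$ and using $B^0=0$, the only transmission derivative $\partial I'/\partial S=\frac{\beta B}{\kappa+B}$ vanishes at $E^0$; hence the rows of the compartments that are zero at the DFE (namely $I,T,R,B$) carry no dependence on $S$ or $V$. This makes $J(E^0)$ block upper-triangular,
\begin{equation*}
J(E^0)=\begin{pmatrix} A & * \\ 0 & M\end{pmatrix},\qquad
A=\begin{pmatrix}-a_0 & \omega_2\\ \varphi & -a_4\end{pmatrix},\qquad
M=\begin{pmatrix}-a_1 & 0 & 0 & \chi\\ \delta & -a_2 & 0 & 0\\ 0 & \varepsilon & -a_3 & 0\\ \eta & 0 & 0 & -d\end{pmatrix},
\end{equation*}
with $\chi:=\dfrac{\beta S^0}{\kappa}=\dfrac{\beta\Lambda a_4}{(a_0a_4-\varphi\omega_2)\kappa}$, so that $\chi\eta/(a_1d)=R_0$. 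The spectrum of $J(E^0)$ is therefore the union of the spectra of $A$ and $M$, and the upper-right coupling block is irrelevant to the eigenvalues.

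First I would dispose of the non-infected block $A$. Its characteristic polynomial is $\lambda^2+(a_0+a_4)\lambda+(a_0a_4-\varphi\omega_2)$, whose coefficients are both positive — the constant term by Remark~\ref{remark:a0a4_positivos} — so the Routh--Hurwitz criterion for a quadratic gives two eigenvalues with negative real part, independently of $R_0$. Thus $A$ never contributes an unstable mode, which is exactly the ``non-infected subsystem is stable'' hypothesis one would otherwise have to check.

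The sign of $R_0-1$ enters only through $M$. Successive cofactor expansion of $\det(M-\lambda I)$ along the $R$- and then the $T$-column (each reduces to a single nonzero entry) yields the factorisation
\begin{equation*}
\det(M-\lambda I)=(\lambda+a_2)(\lambda+a_3)\bigl[\lambda^2+(a_1+d)\lambda+(a_1d-\chi\eta)\bigr],
\end{equation*}
where $a_1d-\chi\eta=a_1d(1-R_0)$. Two eigenvalues are $-a_2,-a_3<0$; since $a_1+d>0$ always, Routh--Hurwitz for the bracketed quadratic reduces the stability of the remaining pair to the single inequality $a_1d-\chi\eta>0$, i.e. $R_0<1$. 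This delivers all three conclusions at once: if $R_0<1$ then $a_1d-\chi\eta>0$ and every eigenvalue of $J(E^0)$ has negative real part, so $E^0$ is locally asymptotically stable; if $R_0>1$ then $a_1d-\chi\eta<0$, the product of the two roots is negative, one of them is real positive, and $E^0$ is unstable; if $R_0=1$ then $a_1d-\chi\eta=0$, so $\lambda=0$ is an eigenvalue (the partner root being $-(a_1+d)<0$), the linearisation is inconclusive, and we are in the critical case.

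The hard part is really the single structural observation that unlocks everything: recognising that $J(E^0)$ decouples into the always-stable block $A$ and the block $M$ that carries the reproduction threshold, together with verifying that the quadratic factor of $\det(M-\lambda I)$ has constant term precisely $a_1d(1-R_0)$; after that the Routh--Hurwitz bookkeeping is routine. Alternatively, parts (1) and (2) follow directly from Theorem~2 of \cite{Driessche} applied to the already computed $F_0$ and $V_0$ (one checks $F_0\ge 0$, that $V_0$ is a nonsingular M-matrix, and that the block $A$ is Hurwitz), but that result only covers $R_0<1$ and $R_0>1$ strictly, so the critical case $R_0=1$ must in any event be extracted from the explicit spectrum as above.
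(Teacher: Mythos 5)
Your proof is correct and follows essentially the same route as the paper: both linearise at $E^0$ (the paper's $F_0-V_0$ is exactly the Jacobian there), arrive at the identical factorisation of the characteristic polynomial into $\chi^2+(a_0+a_4)\chi+(a_0a_4-\varphi\omega_2)$, $\chi^2+(a_1+d)\chi+a_1d-\chi\eta$ with $a_1d-\chi\eta=a_1d(1-R_0)$, and the roots $-a_2$, $-a_3$, and then conclude by Routh--Hurwitz. Your block-triangular reordering simply makes explicit the factorisation the paper asserts (with the help of Maple), and your product-of-roots argument for instability when $R_0>1$ and the explicit zero eigenvalue at $R_0=1$ fill in details the paper leaves implicit, but the underlying argument is the same.
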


\begin{proof}
The characteristic polynomial associated with the linearised system 
of model \eqref{ModeloColera_vaccination} is given by
\begin{align*}
p_V(\chi)=\det(F_0-V_0-\chi I_6).
\end{align*} 
In order to compute the roots of the polynomial $p_V$, we have that
\begin{equation*}
\left|\begin{matrix}
-a_0-\chi & 0 & 0 & \omega_1 & \omega_2 & 
-\displaystyle\frac{\beta \Lambda a_4}{(a_0a_4-\varphi\omega_2)\kappa}\\[0.25cm]
0 & -a_1-\chi & 0 & 0 & 0 & \displaystyle
\frac{\beta \Lambda a_4}{(a_0a_4-\varphi\omega_2)\kappa} \\[0.25cm]
0 & \delta & -a_2-\chi & 0 & 0 & 0\\[0.25cm]
0 & 0 & \varepsilon & -a_3-\chi & 0 & 0\\[0.25cm]
\varphi & 0 & 0 & 0 & -a_4-\chi & 0\\[0.25cm]
0 & \eta & 0 & 0 & 0 & -d-\chi
\end{matrix}\right|=0,
\end{equation*}
that is,
$$
\chi^2+(a_0+a_4)\chi+(a_0a_4-\varphi\omega_2)=0 \vee \chi^2
+(a_1+d)\chi+a_1d-\frac{\beta\Lambda\eta a_4}{(a_0a_4-\varphi\omega_2)\kappa}
=0 \vee \chi=-a_2 \vee \chi=-a_3.
$$
As the coefficients of polynomial $\chi^2+(a_0+a_4)\chi+(a_0a_4-\varphi\omega_2)$ 
have the same sign (see Remark~\ref{remark:a0a4_positivos}), then it follows from
Routh's criterion that their roots have negative real part (see, e.g., pp.~55--56 of \cite{Olsder}). 
Furthermore, using similar arguments, the roots of the polynomial
$$
\chi^2+(a_1+d)\chi+a_1d
-\frac{\beta\Lambda\eta a_4}{(a_0a_4-\varphi\omega_2)\kappa}
$$
have negative real part if and only if
$$
a_1d-\frac{\beta\Lambda\eta a_4}{(a_0a_4-\varphi\omega_2)\kappa}
>0 \Leftrightarrow R_0<1.
$$
Therefore, the DFE $E^0$ is (i) locally asymptotic stable if $R_0<1$;
(ii) unstable if $R_0>1$; (iii) critical if $R_0=1$. 
\end{proof}

We end this section by proving the local stability of the endemic 
equilibrium $E^*$. Our proof is based on the center manifold 
theory \cite{Carr}, as described in Theorem~4.1 of \cite{Castillo}.

\begin{theorem}[Local asymptotic stability of the endemic 
equilibrium \eqref{EndemicEquilibrium_vaccination}] 
The endemic equilibrium $E^*$ of model \eqref{ModeloColera_vaccination}
(see Proposition~\ref{prop:EE_vaccination}) is locally asymptotic stable for $R_0>1$.
\end{theorem}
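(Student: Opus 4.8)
The plan is to apply the center-manifold bifurcation theorem of Castillo-Chavez and Song (Theorem~4.1 of \cite{Castillo}), exactly as announced, treating $R_0=1$ as a transcritical bifurcation threshold. First I would relabel the state as $(x_1,\dots,x_6)=(S,I,T,R,V,B)$ and select the ingestion rate $\beta$ as bifurcation parameter, solving $R_0=1$ in \eqref{R0_vaccination} for the critical value $\beta^\ast=(a_0a_4-\varphi\omega_2)\kappa d a_1/(\Lambda\eta a_4)$. Writing $c:=\beta\Lambda a_4/((a_0a_4-\varphi\omega_2)\kappa)$ for the coupling entry appearing in $F_0$ and $V_0$, the threshold condition reads $c\eta=a_1 d$. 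At $\beta=\beta^\ast$ the two quadratic factors obtained in the proof of the DFE-stability theorem show that the Jacobian $F_0-V_0$ at $E^0$ has a simple zero eigenvalue while every other eigenvalue has negative real part; this is precisely the non-hyperbolicity hypothesis required by \cite{Castillo}.

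Next I would compute the right null vector $w$ and left null vector $v$ of $F_0-V_0$ at threshold. Solving $(F_0-V_0)w=0$ gives $w_2,w_3,w_4>0$ each proportional to $w_6$, with $w_5=\tfrac{\varphi}{a_4}w_1$ and $w_1$ fixed by the first equation; using $c\eta=a_1 d$ one finds $w_1$ proportional (with a positive factor) to $\delta\varepsilon\omega_1-a_1a_2a_3$. The decisive simplification is the left vector: the two relations forced by the $S$- and $V$-columns cannot both hold because $a_0a_4-\varphi\omega_2>0$, so $v_1=0$, and then successively $v_4=v_3=v_5=0$, leaving only $v_2,v_6>0$. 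This collapses the bifurcation coefficients to contributions from the single nonlinear equation $f_2$.

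Then I would evaluate the two scalars $a$ and $b$ of \cite{Castillo}. Since only $v_2\neq0$ multiplies a nonlinear right-hand side, and only $f_2=\tfrac{\beta x_6 x_1}{\kappa+x_6}-a_1 x_2$ carries the saturating incidence, the double sums reduce to the second derivatives $\partial^2 f_2/\partial x_1\partial x_6=\beta/\kappa$ and $\partial^2 f_2/\partial x_6^2=-2\beta S^0/\kappa^2$ evaluated at $E^0$, whence
$$
a=\frac{2\beta v_2 w_6}{\kappa}\left(w_1-\frac{S^0}{\kappa}\,w_6\right),\qquad
b=\frac{v_2 w_6 S^0}{\kappa}.
$$
Here $b>0$ is immediate, while $a<0$ follows because $w_1<0$: indeed $a_1a_2a_3-\delta\varepsilon\omega_1>0$ (item~6 in the proof of Proposition~\ref{prop:EE_vaccination}) forces the factor $\delta\varepsilon\omega_1-a_1a_2a_3<0$, making both terms inside the parenthesis negative. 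With $a<0$ and $b>0$, Theorem~4.1 of \cite{Castillo} yields a forward bifurcation whose endemic branch is locally asymptotically stable for $R_0$ just above $1$, which is the assertion.

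The hard part will be the bookkeeping for $a$: one must correctly express $w_1$ and $w_4$ in terms of $w_6$ and then certify the sign through the inequality $a_1a_2a_3>\delta\varepsilon\omega_1$ already established for the endemic equilibrium. Everything else is either linear algebra at the single point $E^0$ or a direct verification of the structural hypotheses of \cite{Castillo}, so the computation of the coefficient $a$, rather than any conceptual difficulty, is the genuine obstacle.
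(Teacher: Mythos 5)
Your proposal is correct and follows essentially the same route as the paper's proof: the Castillo-Chavez--Song center-manifold theorem with $\beta$ as bifurcation parameter, the same right and left null vectors (with $v_1=v_3=v_4=v_5=0$ collapsing both sums onto $f_2$), and the same signs $a<0$, $b>0$ obtained from $a_1a_2a_3-\delta\varepsilon\omega_1>0$. Incidentally, your expressions $\partial^2 f_2/\partial x_6^2 (E^0)=-2\beta^* S^0/\kappa^2$ and $w_5=(\varphi/a_4)w_1$ correct two inconsequential slips in the paper's version of the same computation (the paper omits the factor $\kappa^2$ and writes $w_5$ proportional to $w_2$), neither of which affects the signs of $a$ and $b$.
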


\begin{proof}
In order to apply the method described in Theorem~4.1 of \cite{Castillo}, 
we are going to do the following change of variables. Let us consider that 
\begin{equation*}
X = (x_1, x_2, x_3, x_4, x_5, x_6) = (S, I, T, R, V, B).
\end{equation*} 
Consequently, we have that the total number of individuals 
is given by $N=\sum_{i=1}^{5}x_i$. Thus, the model 
\eqref{ModeloColera_vaccination} can be written as follows:
\begin{align}
\label{ModeloColeraX_vaccination}
\begin{cases}
x_1'(t)=f_1=\Lambda-\displaystyle\frac{\beta x_6(t)}{\kappa+x_6(t)}x_1(t)
+\omega_1 x_4(t)+\omega_2x_5(t)-a_0x_1(t),\\
x_2'(t)=f_2=\displaystyle\frac{\beta x_6(t)}{\kappa+x_6(t)}x_1(t)-a_1x_2(t),\\
x_3'(t)=f_3=\delta x_2(t)-a_2x_3(t),\\
x_4'(t)=f_4=\varepsilon x_3(t)-a_3x_4(t),\\
x_5'(t)=f_5=\varphi x_1(t)-a_4x_5(t),\\
x_6'(t)=f_6=\eta x_2(t)-dx_6(t).
\end{cases}
\end{align}
Choosing $\beta^*$ as bifurcation parameter and solving for $\beta$, 
from $R_0=1$ we have that
\begin{align*}
\beta^*=\frac{(a_0a_4-\varphi\omega_2)\kappa da_1}{\Lambda\eta a_4}.
\end{align*}
Considering $\beta=\beta^*$, the Jacobian of the 
system \eqref{ModeloColeraX_vaccination} 
evaluated at $E^0$ is given by 
\begin{align*}
J_0^*=\left[
\begin{matrix}
-a_0 & 0 & 0 & \omega_1 & \omega_2 
& -\displaystyle\frac{a_1d}{\eta} \\
0 & -a_1 & 0 & 0 & 0 & \displaystyle\frac{a_1d}{\eta} \\
0 & \delta & -a_2 & 0 & 0 & 0\\
0 & 0 & \varepsilon & -a_3 & 0 & 0\\
\varphi & 0 & 0 & 0 & -a_4 & 0\\
0 & \eta & 0 & 0 & 0 & -d \\
\end{matrix}
\right].
\end{align*}
The eigenvalues of $J_0^*$ are obtained solving the equation 
$\det(J_0^*-\chi I_6)=0$. Thus, we have that
\begin{align*}
&\ \det(J_0^*-\chi I_6)=0\\
\Leftrightarrow&\ \chi=0\ \vee\ \chi=-a_1-d\ \vee\ \chi
=-a_2\ \vee\ \chi=-a_3\ \vee\ \chi=-\frac{1}{2}\left(a_0+a_4
\pm\sqrt{(a_0-a_4)^2+4\varphi\omega_2}\right).
\end{align*}
Note that the eigenvalue $\chi=-\frac{1}{2}\left(a_0+a_4-\sqrt{(a_0-a_4)^2
+4\varphi\omega_2}\right)$ is a negative real number, because
\begin{equation*}
\begin{split}
-\frac{1}{2}\left(a_0+a_4-\sqrt{(a_0-a_4)^2+4\varphi\omega_2}\right)
&=-\frac{1}{2}\left(\varphi+\mu+\omega_2+\mu-\sqrt{(\varphi+\mu-\omega_2-\mu)^2
+4\varphi\omega_2}\right)\\
&=-\frac{1}{2}\left(\varphi+\omega_2+2\mu-\sqrt{\varphi^2-2\varphi\omega_2
+\omega_2^2+4\varphi\omega_2}\right)\\
&=-\frac{1}{2}\left(\varphi+\omega_2+2\mu-\sqrt{(\varphi+\omega_2)^2}\right)\\
&\underset{\varphi+\omega_2\geq0}{=}
-\frac{1}{2}\left(\varphi+\omega_2+2\mu-(\varphi+\omega_2)\right)\\
&=-\mu<0.
\end{split}
\end{equation*}
Therefore, we can conclude that a simple eigenvalue of $J_0^*$ is zero, while all other 
eigenvalues of $J_0^*$ have negative real parts. So, the center manifold
theory \cite{Carr} can be applied to study the dynamics of \eqref{ModeloColeraX_vaccination} 
near $\beta=\beta^*$. Theorem 4.1 in \cite{Castillo} is used to show 
the local asymptotic stability of the endemic equilibrium point of \eqref{ModeloColeraX_vaccination}, 
for $\beta$ near $\beta^*$. The Jacobian $J_0^*$ has, respectively, a right 
eigenvector and a left eigenvector (associated with the zero eigenvalue), 
$$
w=\left[
\begin{matrix}
w_1\ \ \ & w_2\ \ \ & w_3\ \ \ & w_4\ \ \ & w_5\ \ \ & w_6
\end{matrix}
\right]^T 
\quad \text{ and } \quad 
v=\left[
\begin{matrix}
v_1\ \ \ & v_2\ \ \ & v_3\ \ \ & v_4\ \ \ & v_5\ \ \ & v_6
\end{matrix}
\right]^T,
$$ 
given by
\begin{align*}
w&=\left[
\begin{matrix}
\displaystyle-\frac{a_4(a_1a_2a_3-\delta\varepsilon\omega_1)}{a_2
a_3(a_0a_4-\varphi\omega_2)}\ \ \ & 1\ \ \ & \displaystyle\frac{\delta}{a_2}
\ \ \ & \displaystyle\frac{\delta\varepsilon}{a_2a_3}\ \ \ 
& \displaystyle\frac{\varphi}{a_4}\ \ \ & \displaystyle\frac{\eta}{d}
\end{matrix}
\right]^T
w_2
\end{align*}
and
\begin{align*}
v=\left[\begin{matrix}0\ \ \ & 1\ \ \ & 0\ \ \ & 0\ \ \ & 0\ \ \ &
\displaystyle\frac{a_1}{\eta}\end{matrix}\right]^Tv_2.
\end{align*}
Remember that $f_l$ represents the right-hand side of the 
$l$th equation of the system \eqref{ModeloColeraX_vaccination} 
and $x_l$ is the state variable whose derivative is given 
by the $l$th equation, $l=1, \ldots, 6$. 
The local stability near the bifurcation point $\beta=\beta^*$ 
is determined by the signs of two associated constants 
$a$ and $b$ defined by
\begin{align*}
a=\sum_{k,i,j=1}^{6}v_kw_iw_j\left[\frac{\partial^2f_k}{\partial x_i
\partial x_j}(E^0)\right]_{\beta=\beta^*}
\text{\ \ and\ \ }
b=\sum_{k,i=1}^{6}v_kw_i\left[\frac{\partial^2f_k}{\partial x_i
\partial \phi}(E^0)\right]_{\beta=\beta^*}
\end{align*}
with $\phi=\beta-\beta^*$. As $v_1=v_3=v_4=v_5=0$, 
we only have to consider the following non-zero partial 
derivatives at the disease free equilibrium $E^0$:
\begin{align*}
\left[\frac{\partial^2f_2}{\partial x_1\partial x_6}(E^0)\right]_{\beta=\beta^*}
=\left[\frac{\partial^2f_2}{\partial x_6\partial x_1}(E^0)\right]_{\beta=\beta^*}
=\frac{\beta^*}{\kappa}
\ \text{\ \ and\ \ }\ \left[\frac{\partial^2f_2}{\partial x_6^2}(E^0)\right]_{\beta=\beta^*}
=-\frac{2\beta^*\Lambda a_4}{a_0a_4-\varphi\omega_2}.
\end{align*}
Therefore, the constant $a$ is
\begin{align*}
a=-\frac{2\beta^*\eta a_4}{d(a_0a_4-\varphi\omega_2)}\left(
\frac{a_1a_2a_3-\delta\varepsilon\omega_1}{a_2a_3\kappa}
+\frac{\Lambda\eta}{d}\right)v_2w_2^2<0.
\end{align*}
Furthermore, we have that
\begin{equation*}
\begin{split}
b=v_2w_6\left[\frac{\partial^2f}{\partial 
x_6\partial\phi}(E^0)\right]_{\beta=\beta^*}
=\frac{\Lambda\eta a_4}{\kappa d (a_0a_4-\varphi\omega_2)}v_2w_2>0.
\end{split}
\end{equation*}
Thus, as 
\begin{equation*}
\begin{cases}
a<0\\
b>0\\
\phi=\beta-\beta^*=\displaystyle\frac{a_1\kappa d(a_0a_4
-\varphi\omega_2)}{\Lambda\eta a_4}(R_0-1)>0
\end{cases}
\Leftrightarrow
\begin{cases}
a<0\\
b>0\\
R_0>1,
\end{cases}
\end{equation*}
we conclude from Theorem~4.1 in \cite{Castillo}
that the endemic equilibrium $E^*$ of \eqref{ModeloColera_vaccination} 
is locally asymptotic stable for a value of the basic reproduction number such that $R_0>1$. 
\end{proof}

% ------------------------------------------------------

\section{Numerical Simulations}
\label{sec:num:simu_vaccination}

In this section, we simulate the worst cholera outbreak
that ever occurred in human history. 
It occurred in Yemen, from 27$^{\text{th}}$ April 2017 
to 15$^{\text{th}}$ April 2018 \cite{WHO}. 
As the first-ever oral cholera vaccination campaign had been launched 
only on 6$^{\text{th}}$ May 2018 and was concluded on 15$^{\text{th}}$ 
May 2018 \cite{WHO_vaccination}, to describe such reality of Yemen, 
a numerical simulation of our model is carried out 
with $\varphi=\omega_2=V(0)=0$, that is, in absence of vaccination, 
and with all the other values as in Table~\ref{Tab_Parameter_vaccination}. 
We also simulate an hypothetical situation that includes vaccination 
from the beginning of the outbreak, considering in that case
all values of Table~\ref{Tab_Parameter_vaccination}. Let us denote the numerical 
simulation without and with vaccination by $(NS)$ and $(NS_V)$, respectively. 
The curves of infected individuals for $(NS)$ and $(NS_V)$ 
can be observed in Figure~\ref{fig:cholera_yemen_with_without_vaccination}, 
respectively in blue solid line and in blue dashed line. 
Our results allow us to state that if a vaccination campaign 
had been considered earlier in time, the number of infected individuals 
would have been significantly lower. Furthermore, the basic reproduction number 
of the simulation without vaccination is $R_0\simeq6.132305>1$ and the one with 
vaccination is $R_0\simeq0.753969<1$. This means that if vaccination 
had been considered from the beginning of the outbreak, then the spread of 
cholera would have been extinguished. Consequently, there would not 
have been so many deaths. Note that the decrease of $R_0$ with the introduction 
of a vaccination campaign is expected, because 
$$
\Upsilon^{R_0}_\varphi=-\displaystyle
\frac{\varphi}{\varphi+\omega_2+\mu}\simeq-0.877050<0.
$$
Furthermore, for $(NS)$, we obtain an endemic equilibrium point given by
\begin{multline*}
(S^*, I^*, T^*, R^*, V^*, B^*)\\
=(2.943350\times10^7,\  1.035599
\times10^5,\  5.954131\times10^5,\  1.070992\times10^8,\  0,\  3.138180\times10^6)
\end{multline*}
and for $(NS_V)$ we have a disease-free equilibrium point given by
$$
(S^0, I^0, T^0, R^0, V^0, B^0)
=(1.689119\times10^7,\  0,\  0,\  0,\  1.204910\times10^8,\  0).
$$
Note that the previous figures correspond to the equilibrium points 
for the parameter values of Table~\ref{Tab_Parameter_vaccination}, 
which can be obtained numerically for a final time of approximately 1370 years. 
We also call attention to the fact that the recruitment rate $\Lambda$
of Yemen is big and this leads to a huge growth of the population.
% --------------------------------
\begin{figure}[ht!]
\centering
\includegraphics[scale=0.47]{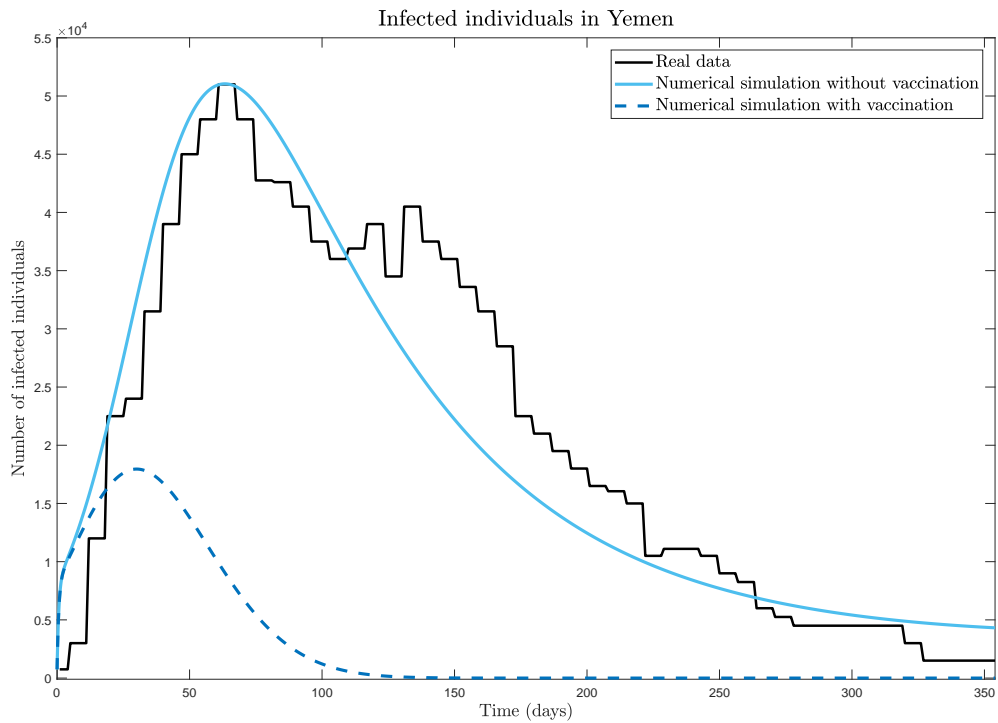}
\caption{Numerical solution of model 
\eqref{ModeloColera_vaccination}--\eqref{eq:init:cond_vaccination}
with the parameters of Table~\ref{Tab_Parameter_vaccination}
without (blue solid line) and with 
(blue dashed line) vaccination and real data (black solid line) 
of infected individuals $I(t)$ in Yemen from 27$^{\text{th}}$ 
April 2017 to 15$^{\text{th}}$ April 2018.}
\label{fig:cholera_yemen_with_without_vaccination}
\end{figure}
% --------------------------------
\begin{table}[ht!]
\doublerulesep 0.1pt
\tabcolsep 7.8mm
\centering
\caption{\rm Parameter values and initial conditions for the SIQRVB model 
\eqref{ModeloColera_vaccination}.}\label{Tab_Parameter_vaccination}
\vspace*{2mm}
\renewcommand{\arraystretch}{1.3}
\setlength{\tabcolsep}{10pt}
\footnotesize
{\begin{tabular*}{15.5cm}{cccc}
\hline\hline\hline
\raisebox{-2ex}[0pt][0pt]{Parameter} & \raisebox{-2ex}[0pt][0pt]{Description} 
& \raisebox{-2ex}[0pt][0pt]{Value} & \raisebox{-2ex}[0pt][0pt]{Reference} \\
\\
\hline
\small $\Lambda$ & \small Recruitment rate 
& \small 28.4$N(0)$/365000 (day$^{-1}$) & \small \cite{BirthRate}\\
\small $\mu$ & \small Natural death rate & \small 
1.6$\times10^{-5}$ (day$^{-1}$) & \small \cite{DeathRate}\\
\small $\beta$ & \small Ingestion rate 
& \small 0.01694 (day$^{-1}$)  & \small Assumed\\
\small $\kappa$ & \small Half saturation constant 
& \small$10^7$ (cell/ml) & \small Assumed\\
\small $\omega_1$ & \small Immunity waning rate 
& \small 0.4/365 (day$^{-1}$) & \small \cite{Neilan}\\
\small $\omega_2$ & \small Efficacy vaccination waning rate 
& \small 1/1 460 (day$^{-1}$) & \small \cite{Cholera_vaccine} \\
\small $\varphi$ & \small Vaccination rate 
& \small 5/1 000 (day$^{-1}$) & \small Assumed \\
\small $\delta$ & \small Treatment rate 
& \small 1.15 (day$^{-1}$) & \small Assumed\\
\small $\varepsilon$ & \small Recovery rate 
& \small 0.2 (day$^{-1}$) & \small \cite{Mwasa}\\
\small $\alpha_1$ & \small Death rate (infected) 
& \small 6$\times10^{-6}$ (day$^{-1}$) & \small \cite{DeathRate,WHO}\\
\small $\alpha_2$ & \small Death rate (in treatment)
& \small 3$\times10^{-6}$ (day$^{-1}$) & \small Assumed\\
\small $\eta$ & \small Shedding rate (infected) 
& \small 10 (cell/ml day$^{-1}$ person$^{-1}$) & \small \cite{Capone}\\
\small $d$ & \small Bacteria death rate 
& \small0.33 (day$^{-1}$) & \small \cite{Capone}\\
\small $S(0)$ & \small Susceptible individuals at $t=0$ 
& \small 28 249 670 (person) & \small \cite{Yemen_pop} \\
\small $I(0)$ & \small Infected individuals at $t=0$ 
& \small 750 (person) & \small \cite{WHO}\\
\small $T(0)$ & \small Treated individuals at $t=0$ 
& \small 0 (person) & \small Assumed \\
\small $R(0)$ & \small Recovered individuals at $t=0$ 
& \small 0 (person) & \small Assumed \\
\small $V(0)$ & \small Vaccinated individuals at $t=0$ 
& \small 0 (person) & \small \cite{WHO_vaccination} \\
\small $B(0)$ & \small Bacterial concentration at $t=0$ 
& \small$275\times10^3$ (cell/ml) & \small Assumed\\
\hline\hline\hline
\end{tabular*}
}
\end{table}

% --------------------------------------

\section{Conclusion}
\label{sec:conclusion}

In this paper, we proposed and analysed, analytically and numerically, 
a SITRVB model for cholera transmission dynamics. In order to fit 
the biggest cholera outbreak worldwide, which has occurred 
very recently in Yemen, we simulated the outbreak 
of Yemen without vaccination. Indeed, vaccination did not exist 
in Yemen from 27$^{\text{th}}$ April 2017 to 15$^{\text{th}}$ 
April 2018. Simulations of our mathematical model, 
with and without vaccination, show that the introduction of vaccination  
from the beginning of the epidemic could have changed the situation 
in Yemen substantially, to the case $R_0<1$, 
where the disease extinguishes naturally. 
Therefore, our research motivates and fortify the 
importance of vaccination in cholera epidemics.

% --------------------------------------

\section*{Acknowledgments}

This research was supported by the
Portuguese Foundation for Science and Technology (FCT)
within projects UID/MAT/04106/2013 (CIDMA), 
and PTDC/EEI-AUT/2933/2014 (TOCCATA), funded by Project 
3599 -- Promover a Produ\c{c}\~ao Cient\'{\i}fica e Desenvolvimento
Tecnol\'ogico e a Constitui\c{c}\~ao de Redes Tem\'aticas, 
and FEDER funds through COMPETE 2020, Programa Operacional
Competitividade e Internacionaliza\c{c}\~ao (POCI).
Lemos-Pai\~{a}o is also supported by the FCT Ph.D.
fellowship PD/BD/114184/2016.

The authors are very grateful to an anonymous referee for reading their paper
carefully and for several constructive remarks, questions and suggestions.

% --------------------------------------

\section*{Conflict of Interest}

The authors declare that there is no conflicts of interest in this paper.

% --------------------------------------

% -----------------------------------------

\end{document}